\newtheorem{theorem}{Theorem}
\newtheorem{lemma}[theorem]{Lemma}
\theoremstyle{definition}
\newtheorem{problem}[theorem]{Problem}
\newtheorem{algorithm}[theorem]{Algorithm}
\theoremstyle{remark}
\newcommand{\N}{\mathbb{N}} 
\newcommand{\R}{\mathbb{R}} 
\newcommand{\F}{\EuScript{F}} 
\newcommand{\ceil}[1]{\lceil {#1} \rceil}
\newcommand{\floor}[1]{\lfloor {#1} \rfloor}
\newcommand{\card}[1]{\left\lvert {#1} \right\rvert}
\newcommand{\ind}[1]{\left\langle {#1} \right\rangle}
\newcommand{\Set}[2]{{\left\{ {#1} \colon {#2} \right\}}}
\newcommand{\downc}{\mathord{\downarrow}}
\newcommand{\sgsum}{\bigoplus}
\newcommand{\lub}{\vee}
\newcommand{\setchoose}[2]{\left(\!\genfrac{}{}{0pt}{}{#1}{#2}\!\right)}
\newcommand{\bv}[1]{{\{ 0, 1 \}^{#1}}}
\newcommand{\bvc}[2]{{\setchoose{ \{ 0, 1 \}^{#1} }{\downc {#2}}}}
\begin{document}

\title{Fast Monotone Summation over Disjoint Sets$^*$}
\thanks{$^*$This paper is an extended version of a conference abstract by the present authors \cite{kaski2012disjoint}. The research was supported in part by the Academy of Finland, 
Grants 252083 (P.K.), 256287 (P.K.), and 125637 (M.K.),
and by the Helsinki Doctoral Programme in Computer Science - Advanced Computing and Intelligent Systems (J.K.).}

\author{Petteri Kaski$^1$}
\address{$^1$Helsinki Institute for Information Technology HIIT \and
Department of Information and Computer Science, Aalto University\\
Finland}
\email{petteri.kaski@aalto.fi}

\author{Mikko Koivisto$^2$}
\email{mikko.koivisto@cs.helsinki.fi}

\author{Janne H. Korhonen$^2$}
\address{$^2$Helsinki Institute for Information Technology HIIT \and
Department of Computer Science, University of Helsinki\\
Finland}
\email{janne.h.korhonen@cs.helsinki.fi}



\begin{abstract}
We study the problem of computing an ensemble of multiple 
sums where the summands in each sum are indexed by subsets 
of size $p$ of an $n$-element ground set. 
More precisely, the task is to compute, 
for each subset of size $q$ of the ground set, the 
sum over the values of all subsets 
of size $p$ that are {\em disjoint} from the subset of size $q$. 
We present an arithmetic circuit that, without subtraction, solves the problem 
using $O((n^p+n^q)\log n)$ arithmetic gates, all monotone;  
for constant $p$, $q$ this is within the factor $\log n$ of the
optimal. The circuit design is based on viewing 
the summation as a ``set nucleation'' task and using a 
tree-projection approach to implement the nucleation. 
Applications include improved algorithms for counting heaviest $k$-paths in a weighted
graph, computing permanents of rectangular matrices, and dynamic feature selection 
in machine learning.
\end{abstract}

\maketitle

\section{Introduction}

\subsection{Weak algebrisation.} 
Many hard combinatorial problems benefit from {\em algebrisation},
where the problem to be solved is cast in algebraic terms as the task 
of evaluating a particular expression or function over a suitably rich 
algebraic structure, such as a multivariate polynomial ring over 
a finite field. Recent
advances in this direction include improved algorithms for 
the $k$-path \cite{williams2009finding}, 
Hamiltonian path \cite{bjorklund2010determinant}, 
$k$-coloring \cite{bjorklund2009set}, 
Tutte polynomial \cite{bjorklund2008computing}, 
knapsack \cite{lokshtanov2010saving},
and connectivity \cite{cygan2011solving} problems.
A key ingredient in all of these advances is the exploitation
of an algebraic catalyst, such as the existence of additive inverses 
for inclusion--exclusion, or the existence of roots of unity for
evaluation/interpolation, to obtain fast evaluation algorithms.

Such advances withstanding, it is a basic question whether 
the catalyst is {\em necessary} to obtain speedup. For example,
fast algorithms for matrix multiplication 
\cite{cohn2005grouptheoretic,coppersmith1990matrix}
(and combinatorially related tasks such as finding a triangle 
in a graph \cite{alon1997finding,itai1978finding}) rely
on the assumption that the scalars have a ring structure,
which prompts the question whether a weaker structure, such as
a semiring without additive inverses, would still enable 
fast multiplication. The answer to this particular question is 
known to be negative \cite{kerr1970effect}, but for many of
the recent advances such an analysis has not been carried out.
In particular, many of the recent algebrisations have 
significant combinatorial structure, which gives hope for 
{\em positive} results even if algebraic catalysts are lacking. 
The objective of this paper is to present one such positive result
by deploying {\em combinatorial} tools.

\subsection{A lemma of Valiant.} 
Our present study stems from a technical lemma of 
Valiant \cite{valiant1986negation} encountered in the study of circuit 
complexity over a monotone versus a universal basis. 
More specifically, starting from $n$ variables 
$f_1,f_2,\ldots,f_n$, the objective is to use as few arithmetic 
operations as possible to compute the $n$ sums of variables where 
the $j$th sum $e_j$ includes all the other variables except the variable $f_j$, 
where $j=1,2,\ldots,n$. 

If additive inverses are available, a solution using $O(n)$ arithmetic 
operations is immediate: first take the sum of all the $n$ variables, 
and then for $j=1,2,\ldots,n$ compute $e_j$ by subtracting 
the variable $f_j$. 

Valiant~\cite{valiant1986negation} showed that $O(n)$ operations suffice 
also when additive inverses are {\em not} available; we display 
Valiant's elegant combinatorial solution for $n=8$ below as an arithmetic 
circuit. [[Please see Appendix~\ref{appendix:valiant} for the general case.]]

\begin{center}
\includegraphics{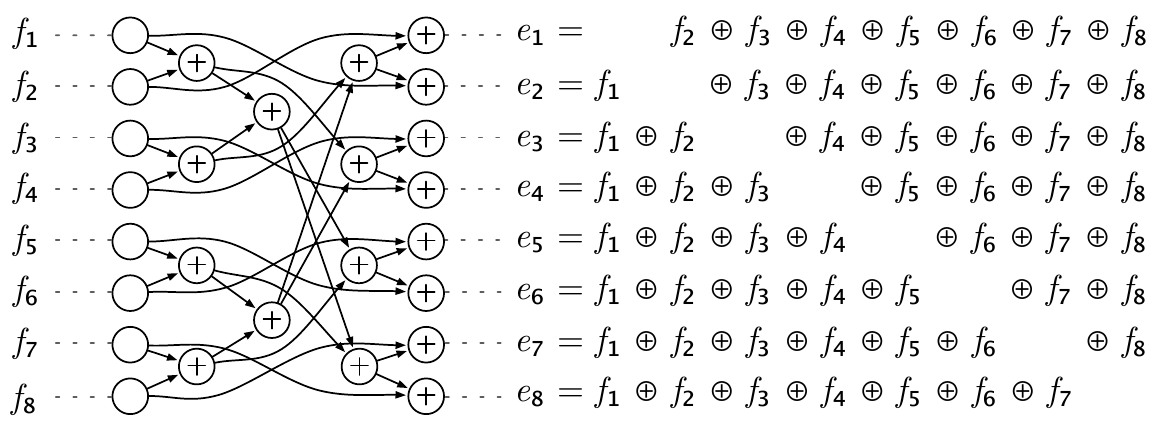}
\end{center}

\subsection{Generalising to higher dimensions.} 
This paper generalises Valiant's lemma to higher
dimensions using purely combinatorial tools. Accordingly, we 
assume that only very limited algebraic structure is available
in the form of a commutative semigroup $(S,\oplus)$. 
That is, $\oplus$ satisfies the associative law 
$x\oplus(y\oplus z)=(x\oplus y)\oplus z$ and the commutative 
law $x\oplus y=y\oplus x$ for all $x,y,z\in S$, but nothing 
else is assumed.

By ``higher dimensions'' we refer to the input not consisting of
$n$ values (``variables'' in the example above) in $S$, but rather 
$\binom{n}{p}$ values $f(X)\in S$ indexed by the $p$-subsets $X$ of
$[n]=\{1,2,\ldots,n\}$. Accordingly, we also allow the output
to have higher dimension. That is, given as input a function $f$ from 
the $p$-subsets $[n]$ to the set $S$, the task is to output 
the function $e$ defined for each $q$-subset $Y$ of $[n]$ by
\begin{equation}\label{eq:e}
e(Y)=\bigoplus_{X:X\cap Y=\emptyset}f(X)\,,
\end{equation}
where the sum is over all $p$-subsets $X$ of $[n]$ satisfying the 
intersection constraint. Let us call this problem 
{\em $(p,q)$-disjoint summation}.
 
In analogy with Valiant's solution for the case $p=q=1$
depicted above, an algorithm that solves the $(p,q)$-disjoint 
summation problem can now be viewed as a circuit consisting of two types 
of gates: {\em input gates} indexed
by $p$-subsets $X$ and {\em arithmetic gates} that perform the 
operation $\oplus$, with certain arithmetic gates designated 
as output gates indexed by $q$-subsets $Y$.
We would like a circuit that has as few gates as possible. 
In particular, does there exist a circuit whose size for 
constant $p$, $q$ is within a logarithmic factor of 
the lower bound $\Theta(n^p + n^q)$? 

\subsection{Main result.} In this paper we answer 
the question in the affirmative. Specifically, we show that a circuit 
of size $O\bigl((n^p+n^q)\log n\bigr)$ exists to compute $e$ from $f$
over an arbitrary commutative semigroup $(S,\oplus)$, and moreover, 
there is an algorithm that constructs the circuit in time 
$O\bigl((p^2+q^2)(n^p+n^q)\log^3 n\bigr)$. 
These bounds hold uniformly for all $p$, $q$. That is,
the coefficient hidden by $O$-notation does not depend on $p$ and $q$.

From a technical perspective our main contribution is 
combinatorial and can be expressed as a solution to 
a specific {\em set nucleation} task. In such 
a task we start with a collection of ``atomic compounds'' 
(a collection of singleton sets), and the goal is to assemble 
a specified collection of ``target compounds'' (a collection of sets
that are unions of the singletons). The assembly is to be executed
by a straight-line program, where each operation in the program 
selects two {\em disjoint} sets in the collection and inserts 
their union into the collection. 
(Once a set is in the collection, it may be selected arbitrarily 
many times.) The assembly should be done in as few operations 
as possible. 

Our main contribution can be viewed as a straight-line program 
of length $O\bigl((n^p+n^q)\log n\bigr)$ that assembles the
collection $\{\{X:X\cap Y=\emptyset\}:Y\}$ starting
from the collection $\{\{X\}:X\}$, where $X$ ranges
over the $p$-subsets of $[n]$ and $Y$ ranges over
the $q$-subsets of $[n]$. 
Valiant's lemma~\cite{valiant1986negation} in these terms 
provides an optimal solution of length $\Theta(n)$ for 
the specific case $p=q=1$. 

\subsection{Applications.} 
Many classical optimisation problems and counting problems can be algebrised
over a commutative semigroup. A selection of applications will be 
reviewed in Sect.~\ref{sect:applications}.

\subsection{Related work.} ``Nucleation'' is implicit in the design of many fast algebraic 
algorithms, perhaps two of the most central are the fast Fourier
transform of Cooley and Tukey \cite{cooley1965algorithm} 
(as is witnessed by the butterfly circuit representation) and
Yates's 1937 algorithm \cite{yates1937design} 
for computing the product of a vector with the tensor product 
of $n$ matrices of size $2\times 2$. The latter can in fact be
directly used to obtain a nucleation process for $(p,q)$-disjoint 
summation, even if an inefficient one.
(For an exposition of Yates's method we recommend Knuth 
\cite[\S4.6.4]{knuth1998art}; take $m_i=2$ and 
$g_i(s_i,t_i)=[\text{$s_i=0$ or $t_i=0$}]$ for $i=1,2,\ldots,n$
to extract the following nucleation process implicit in the
algorithm.)
For all $Z\subseteq [n]$ and $i\in\{0,1,\ldots,n\}$, let
\begin{equation}
\label{eq:yates-family}
a_i(Z)=
\{X\subseteq [n]:X\cap[n-i]=Z\cap[n-i],\ X\cap Z\setminus[n-i]=\emptyset\}\,.
\end{equation}
Put otherwise, $a_i(Z)$ consists of $X$ that agree with 
$Z$ in the first $n-i$ elements of $[n]$ and are disjoint from 
$Z$ in the last $i$ elements of $[n]$. In particular,
our objective is to assemble the sets 
$a_n(Y)=\{X:X\cap Y=\emptyset\}$ for each $Y\subseteq[n]$ starting 
from the singletons $a_0(X)=\{X\}$ for each $X\subseteq[n]$.
The nucleation process given by Yates' algorithm is, 
for all $i=1,2,\ldots,n$ and $Z\subseteq[n]$, to set
\begin{equation}
\label{eq:yates-nucleation}
a_i(Z)=\begin{cases}
a_{i-1}(Z\setminus\{n+1-i\})            & \text{if $n+1-i\in Z$},\\
a_{i-1}(Z\cup\{n+1-i\}) \cup a_{i-1}(Z) & \text{if $n+1-i\notin Z$}.
\end{cases}
\end{equation}
This results in $2^{n-1}n$ disjoint unions. If we restrict to 
the case $|Y|\leq q$ and $|X|\leq p$, then it suffices to consider 
only $Z$ with $|Z|\leq p+q$, which results in $O\bigl((p+q)\sum_{j=0}^{p+q}\binom{n}{j}\bigr)$ disjoint 
unions. Compared with our main result, this is not 
particularly efficient. In particular, our main result relies on 
``tree-projection'' partitioning that enables a significant 
speedup over the ``prefix-suffix'' partitioning in \eqref{eq:yates-family} 
and \eqref{eq:yates-nucleation}.

We observe that ``set nucleation'' can also be viewed as 
a computational problem, where the output collection is
given and the task is to decide whether there is a straight-line program 
of length at most $\ell$ that assembles the output using (disjoint) unions
starting from singleton sets. This problem is known to be NP-complete even 
in the case where output sets have size~$3$~\cite[Problem PO9]{garey-johnson};
moreover, the problem remains NP-complete if the unions are not required 
to be disjoint.

\section{A Circuit for $(p,q)$-Disjoint Summation}\label{section:circuit_construction}

\subsection{ Nucleation of $p$-subsets with a perfect binary tree.}
Looking at Valiant's circuit construction in the introduction, 
we observe that the left half of the circuit accumulates sums of variables 
(i.e., sums of 1-subsets of $[n]$) along what is a perfect binary tree. 
Our first objective is to develop a sufficient generalisation of this
strategy to cover the setting where each summand is indexed 
by a $p$-subset of $[n]$ with $p\geq 1$. 

Let us assume that $n=2^b$ for a nonnegative integer $b$ so that we 
can identify the elements of $[n]$ with binary strings of length $b$. 
We can view each binary string of length $b$ as traversing a unique 
path starting from the root node of a perfect binary tree 
of height $b$ and ending at a unique leaf node. 
Similarly, we may identify any node at level $\ell$ of the tree by 
a binary string of length $\ell$, with $0\leq \ell\leq b$. 
See~Fig.~\ref{figure:vector-tree}(a) for an illustration.
For $p=1$ this correspondence suffices.

\begin{figure}
\includegraphics{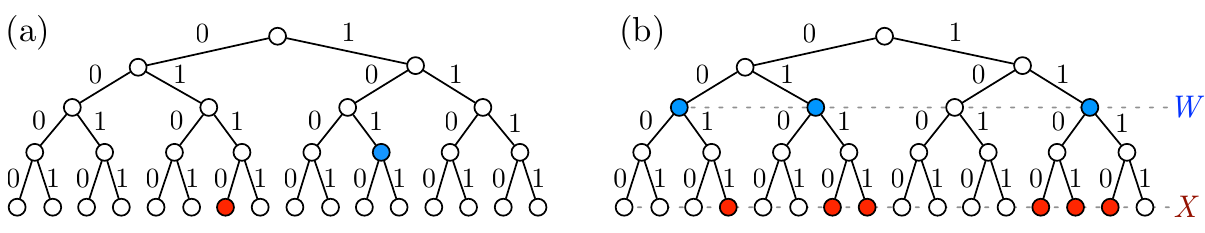}
\caption{Representing $\{0,1\}$-strings of length at most $b$ as nodes in a perfect binary tree of height $b$. Here $b=4$. (a)~Each string traces a unique path down from the root node, with the empty string $\epsilon$ corresponding to the root node. The nodes at level $0\leq\ell\leq b$ correspond to the strings of length $\ell$. The red leaf node corresponds to $0110$ and the blue node corresponds to $101$. (b)~A set of strings corresponds to a set of nodes in the tree. The set $X$ is displayed in red, the set $W$ in blue. The set $W$ is the projection of the set $X$ to level $\ell=2$. Equivalently, $X|_\ell=W$.}
\label{figure:vector-tree}
\end{figure}

For $p>1$, we are not studying individual binary strings of length $b$
(that is, individual elements of $[n]$), but rather $p$-subsets of such 
strings. In particular, we can identify each $p$-subset of $[n]$ 
with a $p$-subset of leaf nodes in the binary tree. To nucleate such
subsets it will be useful to be able to ``project'' sets upward in the
tree. This motivates the following definitions.

Let us write $\bv{\ell}$ for the set of all binary strings of 
length $0\leq\ell\leq b$. 
For $\ell = 0$, we write $\epsilon$ for the empty string.
For a subset $X \subseteq \{ 0, 1 \}^b$, 
we define the \emph{projection of $X$ to level $\ell$} as
\begin{equation}\label{eq:restriction}
X|_\ell = \Set{x \in \{ 0, 1\}^\ell}{ \exists y \in \{ 0, 1\}^{b-\ell} \text{ such that } xy \in X }\,.
\end{equation}
That is, $X|_\ell$ is the set of length-$\ell$ prefixes of strings in $X$.
Equivalently, in the binary tree we obtain $X|_\ell$ by lifting each 
element of $X$ to its ancestor on level $\ell$ in the tree. 
See~Fig.~\ref{figure:vector-tree}(b) for an illustration.
For the empty set we define $\emptyset|_\ell = \emptyset$.

Let us now study a set family $\F \subseteq 2^{\bv{b}}$. 
The intuition here is that each member of $\F$ is a summand, and $\F$ 
represents the sum of its members. A circuit design must assemble 
(nucleate) $\F$ by taking disjoint unions of carefully selected 
subfamilies. This motivates the following definitions.

For a level $0 \le \ell \le b$ and a string 
$W \subseteq \bv{\ell}$ let us define 
{\em the subfamily of $\F$ that projects to $W$} by 
\begin{equation}\label{eq:partition}
\F_W = \Set{X \in \F}{ X|_\ell = W}\,.	
\end{equation}
That is, the family $\F_W$ consists of precisely those members
$X\in\F$ that project to $W$. Again Fig.~\ref{figure:vector-tree}(b)
provides an illustration: we select precisely those $X$ whose 
projection is $W$.

The following technical observations are now immediate.
For each $0 \le \ell \le b$, if $\emptyset \in \F$, then we have
\begin{equation}\label{eq:part_emptyset}
\F_\emptyset = {\{ \emptyset \}}\,.
\end{equation}
Similarly, for $\ell = 0$ we have
\begin{equation}\label{eq:part_epsilon}
\F_{\{ \epsilon \}} = \F \setminus {\{ \emptyset \}}\,.
\end{equation}
For $\ell = b$ we have for every $W \in \F$ that
\begin{equation}\label{eq:tiling_base}
\F_W = \left\{ W \right\}\,.
\end{equation}

Now let us restrict our study to the situation where the family 
$\F \subseteq 2^\bv{b}$ contains only sets of size at most $p$. 
In particular, this is the case in our applications. 
For a set $U$ and an integer $p$, let us write 
$\setchoose{U}{p}$ for the family of all subsets of $U$ of size $p$,
and $\setchoose{U}{\downc p}$ for the family of all subsets 
of $U$ with size at most $p$. Accordingly, for integers $0\leq k\leq n$,
let us use the shorthand $\binom{n}{\downc k} = \sum_{i=0}^k\binom{n}{i}$.

The following lemma enables us to recursively nucleate any
family $\F\subseteq\bvc{b}{p}$. 
In particular, we can nucleate the family $\F_W$ 
with $W$ in level $\ell$ using the families $\F_Z$ with $Z$ in level 
$\ell+1$. Applied recursively, we obtain $\F$ by proceeding from the 
bottom up, that is, $\ell=b,b-1,\ldots,1,0$. 
The intuition underlying the lemma is illustrated in Fig.~\ref{fig:szw}.

\begin{figure}
\begin{center}
\includegraphics{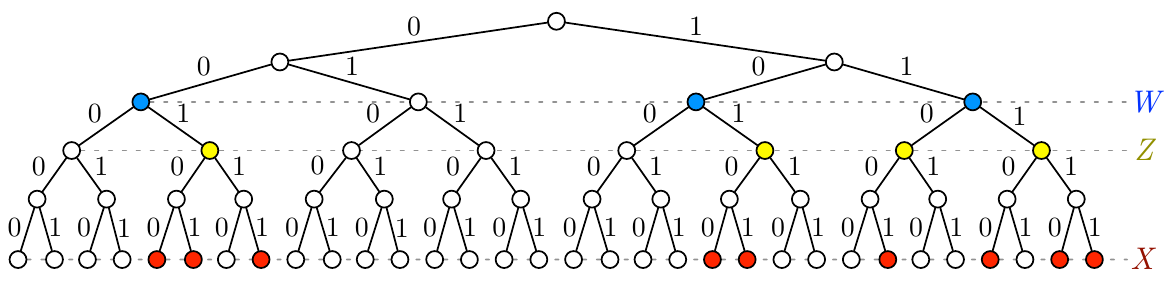}
\caption{Illustrating the proof of Lemma~\ref{lemma:tiling_composition}. 
Here $b=5$. The set $X$ (indicated with red nodes) projects to level $\ell=2$ to the set $W$ (indicated with blue nodes) and to level $\ell+1=3$ to the set $Z$ (indicated with yellow nodes). Furtermore, the projection of $Z$ to level $\ell$ is $W$. Thus, each $X\in\F$ is included to $\F_W$ exactly from $\F_Z$ in Lemma~\ref{lemma:tiling_composition}.}
\label{fig:szw}
\end{center}
\end{figure}

\begin{lemma}\label{lemma:tiling_composition}
For all\/ $0 \le \ell \leq b-1$, $\F\subseteq\bvc{b}{p}$, and $W \in \bvc{\ell}{p}$, we have that the family $\F_W$ is a disjoint union 
$\F_W=\bigcup\Set{\F_Z}{ Z \in \bvc{\ell+1}{p}_W}$.
\end{lemma}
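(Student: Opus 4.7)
The plan is to prove both containments of the claimed set equality and then verify disjointness, with the key structural fact being the transitivity of the projection operator defined in~\eqref{eq:restriction}.

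First I would record the elementary observation that for any set of strings $X \subseteq \bv{b}$ and any levels $0 \le \ell \le \ell' \le b$, we have $(X|_{\ell'})|_\ell = X|_\ell$. This is immediate from the definition: a string $x \in \bv{\ell}$ is a length-$\ell$ prefix of some string in $X$ if and only if it is a length-$\ell$ prefix of some length-$\ell'$ prefix of a string in $X$. I would also note the monotonicity $|X|_{\ell'}| \le |X|$, so in particular if $X \in \bvc{b}{p}$ then $X|_{\ell'} \in \bvc{\ell'}{p}$.

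Next, to show $\F_W \subseteq \bigcup\Set{\F_Z}{Z \in \bvc{\ell+1}{p}_W}$, I would take an arbitrary $X \in \F_W$ and set $Z = X|_{\ell+1}$. By the monotonicity above, $Z \in \bvc{\ell+1}{p}$. By transitivity, $Z|_\ell = (X|_{\ell+1})|_\ell = X|_\ell = W$, so $Z \in \bvc{\ell+1}{p}_W$; and by construction $X \in \F_Z$. For the reverse inclusion, if $X \in \F_Z$ for some $Z \in \bvc{\ell+1}{p}_W$, then $X|_{\ell+1} = Z$ and hence $X|_\ell = Z|_\ell = W$, giving $X \in \F_W$.

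Finally, disjointness is automatic: if $X \in \F_Z \cap \F_{Z'}$ for some $Z,Z' \in \bvc{\ell+1}{p}_W$, then $Z = X|_{\ell+1} = Z'$, so distinct indices yield disjoint families. I do not expect any real obstacle; the entire argument is bookkeeping around the transitivity of the projection $X \mapsto X|_\ell$, and the rest is a direct unfolding of the definitions~\eqref{eq:restriction} and~\eqref{eq:partition}.
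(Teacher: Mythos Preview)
Your proposal is correct and follows essentially the same route as the paper's proof: both hinge on setting $Z = X|_{\ell+1}$ and using the transitivity $(X|_{\ell+1})|_\ell = X|_\ell$ to obtain the biconditional $X \in \F_W \iff Z \in \bvc{\ell+1}{p}_W$, with disjointness coming from the uniqueness of the projection. Your version is slightly more explicit (spelling out monotonicity and the two inclusions separately), but there is no substantive difference.
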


\begin{proof}
The projection of each $X\in\F$ to level $\ell+1$ is unique, so the families $\F_Z$ are pairwise disjoint for distinct $Z$. Now consider an arbitrary $X \in \F$ and set $X|_{\ell+1} = Z$, that is, $X \in \F_Z$. From (\ref{eq:restriction}) we have $X|_\ell = Z|_\ell$, which implies that we have
$X \in \F_W$ if and only if 
$X|_\ell=W$ if and only if 
$Z|_\ell=W$ if and only if 
$Z \in \bvc{\ell+1}{p}_W$.
\end{proof}

\subsection{A generalisation: $(p,q)$-intersection summation.} 
It will be convenient to study a minor generalisation of $(p,q)$-disjoint summation. Namely, instead of insisting on disjointness, we allow nonempty intersections to occur with ``active'' (or ``avoided'') $q$-subsets $A$, but require that elements in the intersection of each $p$-subset and each $A$ are ``individualized.'' That is, our input is not given by associating a value $f(X)\in S$ to each set $X\in\setchoose{[n]}{\downc p}$, but is instead given by associating a value $g(I,X)\in S$ to each pair $(I,X)$ with $I\subseteq X\in\setchoose{[n]}{\downc p}$, where $I$ indicates the elements of $X$ that are ``individualized.'' In particular, we may insist (by appending to $S$ a formal identity element if such an element does not already exist in $S$) that $g(I,X)$ vanishes unless $I$ is empty. This reduces $(p,q)$-disjoint summation to the following problem:

\begin{problem}\label{problem:ssci}
$(${\bf $(p,q)$-intersection summation}$)$
Given as input a function $g$ that maps each pair $(I,X)$ with $I\subseteq X \in \setchoose{[n]}{\downc p}$ and $\card{I} \le q$ to an element $g(I,X) \in S$, output the function $h \colon \setchoose{[n]}{\downc q} \to S$ defined for all $A \in \setchoose{[n]}{\downc q}$ by
\begin{equation}\label{eq:ssci_target}
h(A) = \sgsum_{X \in \setchoose{[n]}{\downc p}} g(A \cap X,X)\,.
\end{equation}
\end{problem}

\subsection{The circuit construction.} 
We proceed to derive a recursion for the function $h$
using Lemma~\ref{lemma:tiling_composition} to carry out nucleation
of $p$-subsets.
The recursion proceeds from the bottom up, that is, $\ell=b,b-1,\ldots,1,0$
in the binary tree representation. (Recall that we identify the elements 
of $[n]$ with the elements of $\{0,1\}^b$, where $n$ is a power of 2 
with $n=2^b$.) The intermediate functions $h_\ell$ computed by 
the recursion are ``projections'' of \eqref{eq:ssci_target} 
using \eqref{eq:partition}. In more precise terms,
for $\ell = b,b-1,\ldots,1,0$, the function $h_\ell \colon \bvc{b}{q} \times \bvc{\ell}{p} \to S$ is defined for all $W \in \bvc{\ell}{p}$ and $A \in \bvc{b}{q}$ by
\begin{equation}\label{eq:h}
h_\ell(A, W) = \sgsum_{X \in \bvc{b}{p}_W} g(A \cap X, X)\,.
\end{equation}

Let us now observe that we can indeed recover the function $h$
from the case $\ell=0$.
Indeed, for the empty string $\epsilon$, the empty set $\emptyset$ 
and every $A \in \bvc{b}{q}$ we have 
by (\ref{eq:part_emptyset}) and (\ref{eq:part_epsilon}) that 
\begin{equation}\label{eq:output_ugly_hack}
h(A) = h_0(A, \{ \epsilon \}) \oplus h_0(A, \emptyset)\,.
\end{equation}

It remains to derive the recursion that gives us $h_0$. Here we require
one more technical observation, which enables us to narrow down the
intermediate values $h_\ell(A,W)$ that need to be computed to 
obtain $h_0$. In particular, we may discard the part of the active set $A$
that extends outside the ``span'' of $W$. This observation is the crux in
deriving a succinct circuit design.

For $0 \le \ell \le b$ and $w \in \bv{\ell}$, 
we define the \emph{span} of $w$ by
\[ 
\ind{w} = \Set{ x \in \bv{b} }{ \exists z \in \{ 0, 1\}^{b-\ell} \text{ such that } wz = x  }\,.
\]
In the binary tree, $\ind{w}$ consists of the leaf 
nodes in the subtree rooted at $w$.
Let us extend this notation to subsets $W \subseteq \bv{\ell}$ by
$\ind{W} = \bigcup_{w \in W} \ind{w}.$ 
The following lemma shows that it is sufficient to evaluate 
$h_\ell(A,W)$ only for $W \in \bvc{\ell}{p}$ and 
$A \in \bvc{b}{q}$ such that $A \subseteq \ind{W}$.

\begin{lemma}\label{lemma:limiting_h}
For all $0 \le \ell \le b$, $W \in \bvc{\ell}{p}$, and $A \in \bvc{b}{q}$, 
we have
\begin{equation}\label{eq:dp_limit}
h_\ell(A, W) = h_\ell(A \cap \ind{W}, W)\,.
\end{equation}
\end{lemma}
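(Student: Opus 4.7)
The plan is to unfold the definition of $h_\ell$ and observe that every set $X$ appearing in the sum is contained in $\ind{W}$, so intersecting $A$ with $\ind{W}$ before intersecting with $X$ changes nothing.

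First, I would establish the containment $X \subseteq \ind{W}$ for every $X \in \bvc{b}{p}_W$. By the definition of $\bvc{b}{p}_W$, we have $X|_\ell = W$, meaning each $x \in X$ has its length-$\ell$ prefix in $W$. But the span $\ind{W} = \bigcup_{w \in W} \ind{w}$ is, by the definition of $\ind{\cdot}$, exactly the set of strings in $\bv{b}$ whose length-$\ell$ prefix lies in $W$. Hence $X \subseteq \ind{W}$.

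From this containment, for every $X$ in the support of the sum in \eqref{eq:h} we have $A \cap X = (A \cap \ind{W}) \cap X$, since elements of $X$ outside $\ind{W}$ do not exist. Substituting term-by-term into
\begin{equation*}
h_\ell(A, W) = \sgsum_{X \in \bvc{b}{p}_W} g(A \cap X, X)
\end{equation*}
yields
\begin{equation*}
h_\ell(A, W) = \sgsum_{X \in \bvc{b}{p}_W} g\bigl((A \cap \ind{W}) \cap X, X\bigr) = h_\ell(A \cap \ind{W}, W)\,,
\end{equation*}
which is \eqref{eq:dp_limit}. No serious obstacle is anticipated; the only subtlety is being careful that the index set $\bvc{b}{p}_W$ is the same on both sides, which it is because replacing $A$ by $A \cap \ind{W}$ does not affect which $X$'s are summed over, only the first argument of $g$.
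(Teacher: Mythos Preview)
Your proposal is correct and follows essentially the same argument as the paper: establish $X \subseteq \ind{W}$ for every $X$ with $X|_\ell = W$, deduce $A \cap X = (A \cap \ind{W}) \cap X$, and substitute term-by-term in \eqref{eq:h}. The only difference is that you spell out the justification for $X \subseteq \ind{W}$ and the invariance of the index set in slightly more detail than the paper does.
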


\begin{proof}
If $W \in \bvc{\ell}{p}$, $X \in \bvc{b}{p}$, and $X|_\ell = W$, 
then we have $X \subseteq \ind{W}$. Thus, directly from (\ref{eq:h}), we have
\[
\begin{split}
h_\ell(A, W) &= \sgsum_{X \in \bvc{b}{p}_W} g(A \cap X, X) \\
             &= \sgsum_{X \in \bvc{b}{p}_W} g((A \cap \ind{W}) \cap X, X) \\
             &= h_\ell(A \cap \ind{W}, W)\,. \qedhere
\end{split}
\]
\end{proof}

We are now ready to present the recursion for $\ell=b,b-1,\dots,1,0$.
The base case $\ell = b$ is obtained directly based on the values of $g$, because we have by (\ref{eq:tiling_base}) for all $W \in \bvc{b}{p}$ and $A \in \bvc{b}{q}$ with $A \subseteq W$ that
\begin{equation}\label{eq:dp_base}
h_b(A, W) = g(A, W)\,.
\end{equation}
The following lemma gives the recursive step from $\ell+1$ to $\ell$
by combining Lemma~\ref{lemma:tiling_composition} and
Lemma~\ref{lemma:limiting_h}.

\begin{lemma}\label{lemma:dp_ind}
For $0 \le \ell \le b-1$, $W \in \bvc{\ell}{p}$, and $A \in \bvc{b}{q}$ with $A \subseteq \ind{W}$, we have
\begin{equation}\label{eq:dp_ind}
h_\ell(A, W) = \sgsum_{Z \in \bvc{\ell+1}{p}_W} h_{\ell+1}(A \cap \ind{Z}, Z)\,.
\end{equation}
\end{lemma}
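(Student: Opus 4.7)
The plan is to unfold the definition of $h_\ell(A,W)$, partition its index set using Lemma~\ref{lemma:tiling_composition}, and then recognise each piece as a value of $h_{\ell+1}$, applying Lemma~\ref{lemma:limiting_h} to bring the active set into its span.

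Concretely, I would start from the definition
\[
h_\ell(A, W) = \sgsum_{X \in \bvc{b}{p}_W} g(A \cap X, X)
\]
given by \eqref{eq:h}. Applying Lemma~\ref{lemma:tiling_composition} with $\F = \bvc{b}{p}$, one obtains the disjoint-union partition
\[
\bvc{b}{p}_W = \bigcup\Set{\bvc{b}{p}_Z}{ Z \in \bvc{\ell+1}{p}_W }\,.
\]
Because $\oplus$ is commutative and associative, this partition lets me split the single sum over $X$ into a double sum, outer over $Z \in \bvc{\ell+1}{p}_W$ and inner over $X \in \bvc{b}{p}_Z$. The inner sum is, by \eqref{eq:h} for level $\ell+1$, exactly $h_{\ell+1}(A, Z)$. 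Finally, Lemma~\ref{lemma:limiting_h} (applied at level $\ell+1$) rewrites each $h_{\ell+1}(A, Z)$ as $h_{\ell+1}(A \cap \ind{Z}, Z)$, yielding \eqref{eq:dp_ind}.

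The hypothesis $A \subseteq \ind{W}$ plays no active role in this derivation; it is an invariant of the recursion that is preserved under the step, since for each $Z \in \bvc{\ell+1}{p}_W$ we have $\ind{Z} \subseteq \ind{W}$, so $A \cap \ind{Z} \subseteq \ind{Z}$. I expect no real obstacle beyond bookkeeping: the only point requiring care is to state clearly that the disjoint-union decomposition from Lemma~\ref{lemma:tiling_composition} justifies reordering the semigroup sum (which uses only the commutative-semigroup axioms), and to invoke Lemma~\ref{lemma:limiting_h} at the correct level. Everything else is a direct substitution of definitions.
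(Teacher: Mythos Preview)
Your proposal is correct and follows essentially the same approach as the paper's own proof: unfold \eqref{eq:h}, apply Lemma~\ref{lemma:tiling_composition} to split the sum over $Z$, recognise the inner sum as $h_{\ell+1}(A,Z)$, and then invoke Lemma~\ref{lemma:limiting_h} at level $\ell+1$. Your added remark that the hypothesis $A\subseteq\ind{W}$ is not needed for the identity itself but is preserved as a recursion invariant is accurate and a useful observation.
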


\begin{proof}
We have
\begin{align*}
    h_\ell(A, W) & = \sgsum_{X \in \bvc{b}{p}_W} g(A \cap X, X) &&  \text{ (\ref{eq:h})}\\
                 & = \sgsum_{Z \in \bvc{\ell+1}{p}_W} \ \sgsum_{X \in \bvc{b}{p}_Z} g(A \cap X, X) && \text{ (Lemma \ref{lemma:tiling_composition})}\\
                 & = \sgsum_{Z \in \bvc{\ell+1}{p}_W} h_{\ell+1}(A, Z) && \text{ (\ref{eq:h})}\\
                 & = \sgsum_{Z \in \bvc{\ell+1}{p}_W} h_{\ell+1}(A \cap \ind{Z}, Z)\,. && \text{ (Lemma \ref{lemma:limiting_h})}\qedhere
    \end{align*}
\end{proof}

\noindent
The recursion given by \eqref{eq:dp_base}, \eqref{eq:dp_ind}, and \eqref{eq:dp_limit} now defines an arithmetic circuit that solves $(p,q)$-intersection summation.

\subsection{\bf Size of the circuit.}
By (\ref{eq:dp_base}), the number of input gates in the circuit is equal to the number of pairs $(I, X)$ with $I \subseteq X \in \bvc{b}{p}$ 
and $\card{X} \le q$, which is
\begin{equation}\label{eq:no_ipgates}
\sum_{i=0}^p\sum_{j=0}^q \binom{2^b}{i}\binom{i}{j}\,.
\end{equation}

To derive an expression for the number of $\oplus$-gates, we count for each $0 \le \ell \le b-1$ the number of pairs $(A,W)$ with $W \in \bvc{\ell}{p}$, $A \in \bvc{b}{q}$, and $A \subseteq \ind{W}$, and for each such pair $(A,W)$ we count the number of $\oplus$-gates in the subcircuit that computes the value $h_\ell(A,W)$ from the values of $h_{\ell+1}$ using (\ref{eq:dp_ind}). 

First, we observe that for each $W \in \bvc{\ell}{p}$ we have $\card{\ind{W}} = 2^{b - \ell} \card{W}$. Thus, the number of pairs $(A, W)$ with $W \in \bvc{\ell}{p}$, $A \in \bvc{b}{q}$, and $A \subseteq \ind{W}$ is 
\begin{equation}\label{eq:no_awpairs}
\sum_{i=0}^p \sum_{j=0}^q \binom{2^\ell}{i} \binom{i 2^{b-\ell}}{j}\,.
\end{equation}
For each such pair $(A,W)$, the number of $\oplus$-gates for
(\ref{eq:dp_ind}) is $\card{\bvc{\ell +1}{p}_W}-1$.

\begin{lemma}
\label{lemma:no_intgates}
For all $0 \le \ell \le b-1$, $W \in \bvc{\ell}{p}$, and $\card{W} = i$, we have
\begin{equation}\label{eq:no_intgates}
	\card{\bvc{\ell+1}{p}_W} = \sum_{k = 0}^{p-i} \binom{i}{k} 2^{i-k}\,.
\end{equation}
\end{lemma}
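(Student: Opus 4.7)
The plan is to characterize the family $\bvc{\ell+1}{p}_W$ combinatorially and then count its elements by a single natural parameter. First, I would unfold the definition: a set $Z \in \bvc{\ell+1}{p}_W$ is a subset of $\bv{\ell+1}$ of size at most $p$ whose projection $Z|_\ell$ equals $W$. In the binary tree picture, $W$ is a set of $i$ nodes at level $\ell$, and each $w \in W$ has exactly two children $w0, w1 \in \bv{\ell+1}$. The condition $Z|_\ell = W$ says that for every $w \in W$ at least one of $w0, w1$ belongs to $Z$, and no string outside $\bigcup_{w \in W}\{w0,w1\}$ belongs to $Z$.

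Hence specifying $Z$ is equivalent to choosing, independently for each $w \in W$, a nonempty subset of $\{w0, w1\}$, of which there are exactly three: $\{w0\}$, $\{w1\}$, or $\{w0,w1\}$. This already shows that, ignoring the size constraint, there are $3^i$ such $Z$.

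The key step is to refine this count by the number $k$ of elements $w \in W$ for which \emph{both} children $w0, w1$ are included in $Z$. Given such a $k$, we choose the $k$ doubled-up elements in $\binom{i}{k}$ ways, and for each of the remaining $i - k$ elements of $W$ we pick one of two children, contributing a factor $2^{i-k}$. The total size of $Z$ is then $|Z| = (i - k) + 2k = i + k$, so the constraint $|Z| \le p$ becomes $k \le p - i$. Summing over $k = 0, 1, \ldots, p - i$ gives exactly
\[
|\bvc{\ell+1}{p}_W| = \sum_{k=0}^{p-i} \binom{i}{k} 2^{i-k},
\]
as claimed. There is no real obstacle here; the only thing to be careful about is verifying that the parametrization by $k$ is a bijection between $\bvc{\ell+1}{p}_W$ and the set of pairs (choice of $k$-subset of $W$, choice of a child for each of the remaining $i - k$ elements), which follows immediately from the observation that $\{w0, w1\}$ for distinct $w \in W$ are pairwise disjoint.
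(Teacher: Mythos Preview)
Your proposal is correct and follows essentially the same approach as the paper: parameterize the sets $Z$ by the number $k$ of elements $w\in W$ for which both children $w0,w1$ lie in $Z$, count $\binom{i}{k}2^{i-k}$ such $Z$ for each $k$, and observe that the size bound $|Z|\le p$ forces $k\le p-i$. Your write-up is slightly more explicit about the bijection and the size computation $|Z|=i+k$, but the argument is the same.
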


\begin{proof}
A set $Z \in \bvc{\ell + 1}{p}_W$ can contain either one or both of the strings $w0$ and $w1$ for each $w \in W$. The set $Z$ may contain both elements for at most $p-i$ elements $w \in W$ because otherwise $\card{Z} > p$. Finally, for each $0 \le k \le p-i$, there are $\binom{i}{k}2^{i-k}$ ways to select a set $Z \in \bvc{\ell + 1}{p}_W$ such that $Z$ contains $w0$ and $w1$ for exactly $k$ elements $w \in W$.
\end{proof}

Finally, for each $A \in \bvc{b}{q}$ we require an 
$\oplus$-gate that is also designated as an output gate
to implement (\ref{eq:output_ugly_hack}).
The number of these gates is
\begin{equation}\label{eq:no_opgates}
\sum_{j=0}^q \binom{2^b}{j}\,.
\end{equation}

The total number of $\oplus$-gates in the circuit is obtained by combining (\ref{eq:no_ipgates}), (\ref{eq:no_awpairs}), (\ref{eq:no_intgates}), and (\ref{eq:no_opgates}). The number of $\oplus$-gates is thus
\begin{align*}
	& \sum_{i=0}^p\sum_{j=0}^q \binom{2^b}{i}\binom{i}{j} + \sum_{\ell = 0}^{b-1}\sum_{i=0}^p\sum_{j=0}^q \binom{2^\ell}{i}\binom{i 2^{b-\ell}}{j}\left(\sum_{k = 0}^{p-i} \binom{i}{k} 2^{i-k} - 1\right) + \sum_{j=0}^q \binom{2^b}{j}\\
  & \le \sum_{\ell = 0}^{b}\sum_{i = 0}^{p}\sum_{j=0}^{q}\binom{2^\ell}{i}\binom{i2^{b-\ell}}{j} 3^i \le \sum_{\ell = 0}^{b}\sum_{i = 0}^{p}\sum_{j=0}^{q}\frac{(2^\ell)^i}{i!}\frac{i^j(2^{b-\ell})^j}{j!}3^i\\
  & \le \sum_{\ell = 0}^{b}\sum_{i = 0}^{p}\sum_{j=0}^{q}\frac{(2^\ell)^{\max(p,q)}}{i!}\frac{i^j (2^{m-\ell})^{\max(p,q)}}{j!}3^i \\
  & = n^{\max(p,q)} (1 + \log_2 n) \sum_{i = 0}^{p}\sum_{j=0}^{q}\frac{i^j 3^i}{i!j!}\,.
\end{align*}
The double sum is at most a constant because we have that 
\begin{equation}\label{eq:constant1}
\sum_{i = 0}^{p}\sum_{j=0}^{q}\frac{i^j 3^i}{i!j!} \le \sum_{i = 0}^{\infty}\frac{3^i}{i!}\sum_{j = 0}^{\infty}\frac{i^j}{j!} = \sum_{i = 0}^{\infty}\frac{3^i}{i!}e^i \le \sum_{i = 0}^{\infty}\frac{(3e^2)^i}{i^i}\,,
\end{equation}
where the last inequality follows from Stirling's formula. Furthermore,
\begin{equation}\label{eq:constant2}
\sum_{i = \ceil{6e^2}}^{\infty}\frac{(3e^2)^i}{i^i} \le \sum_{i = \ceil{6e^2}}^{\infty}\frac{1}{2^i} \le \sum_{i = 0}^{\infty}\frac{1}{2^i} \le 2\,.
\end{equation}
Combining (\ref{eq:constant1}) and (\ref{eq:constant2}), we have
\[\sum_{i = 0}^{p}\sum_{j=0}^{q}\frac{i^j 3^i}{i!j!} \le \sum_{i = 0}^{\floor{6e^2}}\frac{(3e^2)^i}{i^i} + 2\,.\] 
Thus, the circuit defined in \S\ref{section:circuit_construction} has size $O((n^p + n^q) \log n )$, where the constant hidden by the $O$-notation does not depend on $p$ and $q$.

\subsection{\bf Constructing the Circuit.} In this section we give an algorithm that outputs the circuit presented above, given $b$, $p$, and $q$ as input. This algorithm can also be used to compute \eqref{eq:ssci_target} directly without constructing the circuit first.

\begin{algorithm}\label{alg:circuit_print}
Outputs a list of gates in the circuit, with labels on 
the input and output gates.
\begin{enumerate}[label=\arabic*.,ref=\arabic*]
        \item Initialise an associative data structure $D$
	\item For each $X \in \bvc{b}{p}$ and $I\in \setchoose{W}{\downc q}$, create an input gate $g$ labelled with $(I,X)$ and set $D(I,X) \gets g$.
	\item Set $\ell \gets b-1$.
	\item\label{step:l_loop_start} For each $W \in \bvc{\ell}{p}$ and $A \in \setchoose{\ind{W}}{\downc q}$, 
	\begin{enumerate}[label*=\arabic*.,ref=\arabic*]
		\item select an arbitrary $Z_0 \in \bvc{\ell + 1}{p}_W$ and set $g \gets D(A\cap\ind{Z_0},Z_0)$, 
	 	\item for each $Z \in \bvc{\ell + 1}{p}_W \setminus {\{ Z_0 \}}$ set $g \gets g \oplus D(A\cap\ind{Z},Z)$, and
		\item set $D(A, W) \gets g$.
	\end{enumerate}
	\item If $\ell \ge 1$, set $\ell \gets \ell -1$ and go to Step \ref{step:l_loop_start}.
	\item For each $A \in \bvc{b}{q}$, create an output gate $D(A, {\{\epsilon \}}) \oplus D(A, \emptyset)$ labelled with $A$.
\end{enumerate}
\end{algorithm}

Letting $k = \max(p,q)$, the sets that appear in Algorithm \ref{alg:circuit_print} can be represented in $O(k \log n)$ space, and each required operation on these sets can be done in $O(k \log n)$ time. Thus, we observe that iterating over set families in Algorithm \ref{alg:circuit_print} takes $O(k \log n)$ time per element, and the total number of iterations the algorithm makes is same as the number of gates in the circuit. Also, assuming that $D$ is a self-balancing binary tree, each search and insert operation takes $O{\bigl((k\log n)^2\bigr)}$ time, and a constant number of these operations is required for each gate. Thus, we have that the total running time of Algorithm \ref{alg:circuit_print} is $O{\bigl((p^2+q^2)(n^p + n^q) \log^3 n \bigr)}$.

\section{Concluding Remarks and Applications}
\label{sect:applications}

We have generalised Valiant's \cite{valiant1986negation} observation that negation is powerless for computing simultaneously the $n$ different disjunctions of all but one of the given $n$ variables: now we know that, in our terminology, subtraction is powerless for $(p, q)$-disjoint summation for any constant $p$ and $q$. (Valiant proved this for $p=q=1$.) Interestingly, requiring $p$ and $q$ be constants turns out to be essential, namely, when subtraction is available, an inclusion--exclusion technique is known \cite{bjorklundSUBMfourier} to yield a circuit of 
size $O\bigl(p\binom{n}{\downc p} + q \binom{n}{\downc q}\bigr)$, which, in terms of $p$ and $q$, is exponentially smaller than our bound $O\bigl((n^p + n^q)\log n\bigr)$. This gap highlights the difference of the algorithmic ideas behind the two results. Whether the gap can be improved to polynomial in $p$ and $q$ is an open question.

While we have dealed with the abstract notions of ``monotone sums'' or semigroup sums, in  applications they most often materialise as maximisation or minimisation, as described in the next paragraphs.  Also, in applications local terms are usually combined not only by one (monotone) operation but two different operations, such as ``$\min$'' and ``$+$''. To facilitate the treatment of such applications, we extend the semigroup to a semiring $(S, \oplus, \odot)$ by introducing a product operation ``$\odot$''. Now the task is to evaluate 
\begin{equation}
\label{eq:XY}
\bigoplus_{X,Y:X\cap Y=\emptyset} f(X)\odot g(Y)\,,
\end{equation}
where $X$ and $Y$ run through all $p$-subsets and $q$-subsets of $[n]$, respectively, and $f$ and $g$ are given mappings to $S$.
We immediately observe that the expression \eqref{eq:XY} is equal 
to $\bigoplus_{Y} e(Y)\odot g(Y)$, where the sum is over all $q$-subsets of $[n]$ and $e$ is as in \eqref{eq:e}. Thus, by our main result, it
can be evaluated using a circuit with $O((n^p + n^q)\log n)$ 
gates.

\subsection{Application to $k$-paths.}\label{section:k-paths}
We apply the semiring formulation to the problem of counting the maximum-weight $k$-edge paths from vertex $s$ to vertex $t$ in a given edge-weighted graph with real weights, where we assume that we are only allowed to add and compare real numbers and these operations take constant time (cf.~\cite{vassilevska2010finding}). By straightforward Bellman--Held--Karp type dynamic programming \cite{bellman1960combinatorial,bellman1962dynamic,held1962dynamic} (or, even by brute force) we can solve the problem in $\binom{n}{\downc k}n^{O(1)}$ time. However, our main result gives an algorithm that runs in $n^{k/2+O(1)}$ time by solving the problem in halves: Guess a middle vertex $v$ and define $f_1(X)$ as the number of maximum-weight $k/2$-edge paths from $s$ to $v$ in the graph induced by the vertex set $X\cup\{v\}$; similarly define $g_1(X)$ for the $k/2$-edge paths from $v$ to $t$. Furthermore, define $f_2(X)$ and $g_2(X)$ as the respective maximum weights and put $f(X) = (f_1(X), f_2(X))$ and $g(X) = (g_1(X), g_2(X))$. These values can be computed for all vertex subsets $X$ of size $k/2$ in $\binom{n}{k/2}n^{O(1)}$ time. It remains to define the semiring operations in such a way that the expression (\ref{eq:XY}) equals the desired number of $k$-edge paths; one can verify that the following definitions work correctly: $(c, w) \odot (c', w') = (c \cdot c', w+w')$ and 

\begin{equation*}
(c, w) \oplus (c', w') =
\begin{cases}
	(c, w) & \textrm{if $w>w'$},\\
	(c', w') & \textrm{if $w<w'$},\\
	(c+c', w) & \textrm{if $w=w'$}.	
\end{cases}
\end{equation*}

[[Please see Appendix~\ref{appendix:weight-count-semiring} for details.]]

Thus, the techniques of the present paper enable solving the problem essentially as fast as the fastest known algorithms for the special case of counting {\em all} the $k$-paths, for which quite different techniques relying on subtraction yield $\binom{n}{k/2}n^{O(1)}$ time bound \cite{bjorklund2009counting}. 
On the other, for the more general problem of counting weighted subgraphs Vassilevska and Williams \cite{vassilevska2009finding} give an algorithm whose running time, when applied to $k$-paths, is $O(n^{\omega k/3} + n^{2k/3 + c})$, where $\omega < 2.3727$ is the exponent of matrix multiplication and $c$ is a constant; this of course would remain worse than our bound even if $\omega = 2$.

\subsection{Application to matrix permanent.}
Consider the problem of computing 
the permanent of a $k \times n$ matrix $(a_{ij})$ over a {\em noncommutative semiring}, with $k \leq n$ and even for simplicity, given by 
$\sum_{\sigma} a_{1\sigma(1)} a_{2\sigma(2)}\cdots a_{k\sigma(k)}$,
where the sum is over all injective mappings $\sigma$ from $[k]$ to $[n]$. We observe that the expression (\ref{eq:XY}) equals the permanent if we let $p=q=k/2=\ell$ and define $f(X)$ as the sum of $a_{1\sigma(1)}a_{2\sigma(2)}\cdots a_{ \ell\sigma(\ell)}$ over all injective mappings $\sigma$ from $\{1,2,\ldots,\ell\}$ to $X$ and, similarly, $g(Y)$ as the sum of $a_{\ell+1\sigma(\ell+1)}a_{\ell+2\sigma(\ell+2)}\cdots a_{k\sigma(k)}$ over all injective mappings $\sigma$ from $\{\ell+1,\ell+2,\ldots,k\}$ to $Y$. Since the values $f(X)$ and $g(Y)$ for all relevant $X$ and $Y$ can be computed by dynamic programming in $\binom{n}{k/2}n^{O(1)}$ time, our main result yields the time bound $n^{k/2+O(1)}$ for computing the permanent.

Thus we improve significantly upon a Bellman--Held--Karp type dynamic programming algorithm that computes the permanent in $\binom{n}{\downc k}n^{O(1)}$ time, the best previous upper bound we are aware of for noncommutative semirings \cite{bjorklund2010evaluation}. It should be noted, however, that essentally as fast algorithms are already known for {\em noncommutative rings} \cite{bjorklund2010evaluation}, and that faster, $2^{k}n^{O(1)}$ time, algorithms are known for {\em commutative semirings} \cite{bjorklund2010evaluation,koutis2009limits}.

\subsection{Application to feature selection.}
The extensively studied feature selection problem in machine learning asks for a subset 
$X$ of a given set of available features $A$ so as to maximise some objective function $f(X)$. Often the size of $X$ can be bounded from above by some constant $k$, and sometimes the selection task needs to be solved repeatedly with the set of available features  $A$ changing dynamically across, say, the set $[n]$ of all features. Such constraints take place in a recent work \cite{decampos2011efficient} on Bayesian network structure learning by branch and bound: the algorithm proceeds by forcing some features, $I$, to be included in $X$ and some other, $E$, to be excluded from $X$. Thus the key computational step becomes that of maximising $f(X)$ subject to $I \subseteq X \subseteq [n]\setminus E$ and $|X|\leq k$, which is repeated for varying $I$ and $E$. We observe that instead of computing the maximum every time from scratch, it pays off precompute a solution to $(p, q)$-disjoint summation for all $0 \le p, q \le k$, since this takes about the same time as a single step for $I = \emptyset$ and any fixed $E$. Indeed, in the scenario where the branch and bound search proceeds to exclude each and every subset of $k$ features in turn, but no larger subsets, such precomputation decreases the running time bound quite dramatically, from $O(n^{2k})$ to $O(n^k)$; typically, $n$ ranges from tens to some hundreds and $k$ from $2$ to $7$. Admitted, in practice, one can expect the search procedure match the said scenario only partially, and so the savings will be more modest yet significant.

\subsection*{Acknowledgement} We thank Jukka Suomela for useful discussions.

\bibliographystyle{splncs03}
\bibliography{paper}

\clearpage

\appendix

\begin{center}
{\large\bf APPENDIX}
\end{center}

\section{Valiant's Construction and Generalisations}
\label{appendix:valiant}

This section reviews Valiant's \cite{valiant1986negation} circuit construction for $(p,q)$-disjoint summation in the case $p=q=1$. We also present minor generalisations to the case when either $p = 1$ or $q = 1$. For ease of exposition we follow the conventions and notation from Sect.~\ref{section:circuit_construction}. Accordingly, we assume that $n = 2^b$ for a nonnegative integer $b$ and identify the elements of $[n]$ with binary strings in $\bv{b}$.

\subsection{Valiant's construction} For $p = q = 1$, the disjoint summation problem reduces to the following form: given $f \colon \bv{b} \to S$ as input, compute $e:\bv{b}\rightarrow S$ defined for all $y,x\in\{0,1\}^b$ by 
\begin{equation}\label{eq:e11}
e(y) = \sgsum_{x:x \not= y} f(x)\,.
\end{equation}
Valiant's construction computes these sums by first computing intermediate functions $h^{+}_{\ell} \colon \bv{\ell} \to S$ defined for $\ell = 1, 2, \dotsc, b$ and $w \in \bv{\ell}$ by
\begin{equation}\label{eq:h+11}
h^{+}_\ell(w) =  \sgsum_{x \in \ind{w}} f(x)\,,
\end{equation}
and $h^{-}_\ell \colon \bv{\ell} \to S$ defined for $\ell = 1, 2, \dotsc, b$ and  $u \in \bv{\ell}$ by
\begin{equation}\label{eq:h-11}
h^{-}_\ell(u) = \sgsum_{x \in \bv{b} \setminus \ind{u}} f(x)\,.
\end{equation}
The solution to (\ref{eq:e11}) can then be recovered as $e(y) = h^-_b(y)$ for all $y \in \bv{b}$.

The values (\ref{eq:h+11}) can be computed for $\ell = b, b-1, \dotsc,2,1$ using the recurrence
\begin{equation}\label{eq:h+11_rec}
	\begin{aligned}
	h^{+}_b(x) & = f(x)\\
	h^{+}_\ell(w) & = h^{+}_{\ell+1}(w0) \oplus h^{+}_{\ell+1}(w1)\,.
	\end{aligned}
\end{equation}
Assuming that functions  $h^{+}_\ell$ have been computed for all $\ell$, the values (\ref{eq:h-11}) can then be computed for $\ell = 1, 2, \dotsc, b$ as
\begin{equation}\label{eq:h-11_rec}
	\begin{aligned}
	h^{-}_1(u) & = h^{+}_1(1-u)\\
	h^{-}_\ell(ui) & = h^{-}_{\ell-1}(u) \oplus h^{+}_{\ell}(u(1-i))\,.
	\end{aligned}
\end{equation}
The number of $\oplus$-gates to implement (\ref{eq:h+11_rec}) and (\ref{eq:h-11_rec}) as a circuit is exactly
\[ \sum_{i=1}^{b-1} 2^{i} + \sum_{i=2}^{b} 2^{i} = 3n - 6 = O(n)\,.\]

\subsection{Generalisation for $p = 1$ and $q > 1$} 
For $p = 1$ and $q > 1$, the $(p,q)$-disjoint summation problem reduces to 
computing 
\begin{equation}\label{eq:e1q}
e(Y) = \sgsum_{x:x \notin Y} f(x)\,,
\end{equation}
where $x\in\bv{b}$ and $Y\in \setchoose{\bv{b}}{q}$. 

To evaluate (\ref{eq:e1q}), we proceed analogously to the $p = q = 1$ case, first computing functions $h^{+}_{\ell} \colon \bv{\ell} \to S$ as defined in (\ref{eq:h+11}). The second set of intermediate functions now consists of functions $h^{-}_\ell \colon \bvc{\ell}{q} \to S$, defined for $\ell = 0, 1, \dotsc, b$ and $U \in \bvc{\ell}{q}$ by
\begin{equation}\label{eq:h-1q}
h^{-}_\ell(U) = \sgsum_{x \in \bv{b} \setminus \ind{U}} f(x)\,.
\end{equation}
Then we have $e(Y) = h^-_b(Y)$ for all $Y \in \setchoose{\bv{b}}{q}$.

The values $h^+_\ell$ are computed as in Valiant's construction.
For $\ell = 0, 1, \dotsc, b$ and $U \in \bvc{\ell}{q}$, define
\[ \hat{U} = \Set{ x(1-i) }{ xi \in U \text{ and } x(1-i) \notin U}\,. \]
Now we can evaluate (\ref{eq:h-1q}) using recurrence
\begin{equation}
\begin{aligned}
h^{-}_0(\{ \epsilon \}) & = 0\\
h^{-}_0(\emptyset) & = h^{+}_1(0) \oplus h^{+}_1(1)\\
h^{-}_\ell(U) & = h^-_{\ell-1}(U|_{\ell-1}) \oplus \sgsum_{x \in \hat{U}} h^+_\ell(x) \,.
\end{aligned}
\end{equation}

The number of $\oplus$-gates to evaluate the values $h^+_\ell$ is $n-2$. To determine the number of $\oplus$-gates to evaluate the values $h^-_\ell$, we note that $|\hat{U}| \le |U|$, and thus $|\hat{U}| \le q$ gates are used for any $U \in \bvc{\ell}{q}$. Thus, the total number of $\oplus$-gates is at most
\[1 + q\cdot \card{\bigcup_{\ell = 1}^b\bvc{\ell}{q}} = 1 + q\sum_{\ell = 1}^b \binom{2^\ell}{\downc q} = 1 + q\sum_{i = 0}^q \sum_{\ell = 1}^b \binom{2^\ell}{i}.\]
For positive integers $i$ and $\ell$ we have
\[ \binom{2^{b-\ell}}{i} = \frac{2^b}{2^\ell i} \binom{2^{b-\ell}-1}{i-1} \le \frac{2^b}{2^\ell i} \binom{2^b-1}{i-1} = \frac{1}{2^\ell} \binom{2^b}{i}\,.\]
Thus for positive integers $i$ it holds that
\[\sum_{\ell = 1}^b \binom{2^\ell}{i} = \sum_{\ell = 0}^{b-1} \binom{2^{b-\ell}}{i} \le \binom{2^b}{i} \sum_{\ell = 0}^{b-1} \frac{1}{2^\ell} \le 2\binom{2^b}{i}\,, \]
and hence
\[1 + q\sum_{i = 0}^q \sum_{\ell = 1}^b \binom{2^\ell}{i} \le 1 + q\left(b+ \sum_{i = 1}^q 2\binom{2^b}{i}\right) = O\left(q \binom{2^b}{\downc q} \right)\,. \]
That is, the total number of $\oplus$-gates used is $O\bigl(q \binom{n}{\downc q} \bigr)$.

\subsection{Generalisation for $p > 1$ and $q=1$} 

For $p > 1$ and $q = 1$, the $(p,q)$-disjoint summation problem reduces to computing 
\begin{equation}\label{eq:ep1}
e(y) = \sgsum_{X:y \notin X} f(X)\,,
\end{equation}
where $X\in \setchoose{\bv{b}}{p}$ and $y\in\bv{b}$. 

Now the first intermediate functions nucleate the inputs using tree-projection, that is, we define $h^+_\ell \colon \bvc{\ell}{p} \to S$ for $\ell = 0, 1, \dotsc, b$ and $W \in \bvc{\ell}{p}$ by
\begin{equation}\label{eq:h+p1}
h^+_\ell(W) = \sgsum_{X \in \bvc{b}{p}_W} f(X)\,.
\end{equation}
We define the second intermediate functions $h^-_\ell \colon \bv{\ell} \to S$ for all $\ell = 0, 1, \dotsc, b$ and $u \in \bv{\ell}$ by
\begin{equation}\label{eq:h-p1}
h^-_\ell(u) = \sgsum_{W:u \notin W} f(W)\,,
\end{equation}
where $W$ ranges over $\bvc{\ell}{p}$.
Again $e(y) = h^-_b(y)$ for all $y \in \bv{b}$.

By (\ref{eq:tiling_base}) and Lemma \ref{lemma:tiling_composition}, we can be compute (\ref{eq:h+p1}) by the recurrence
\begin{align*}
h^{+}_b( X ) & = f(X)\\
h^{+}_\ell(W) & = \sgsum_{Z \in \bvc{\ell+1}{p}_W} h^+_{\ell+1}(Z)\,.
\end{align*}
Similarly, we can compute (\ref{eq:h-p1}) by the recurrence
\begin{align*}
h^-_0(\{\epsilon\}) & = 0\\
h^-_\ell(xi) & = h^-_{\ell - 1}(x) \oplus \sgsum_{W \in \setchoose{\bv{\ell} \setminus \{ x0, x1 \}}{\downc (p-1)} } h^+_\ell (\{ x(1-i) \} \cup W)\,.
\end{align*}

The number of $\oplus$-gates to evaluate $h^+_\ell$ is $\binom{2^b}{\downc p} - 1$, and to evaluate $h^-_\ell$ at most
\begin{align*}
2 + \sum_{\ell=2}^{b}2^\ell \binom{2^\ell- 2}{\downc (p-1)} & = 2 + \sum_{\ell=2}^{b} \sum_{i=0}^{p-1} 2^\ell\binom{2^\ell- 2}{i}\\
& \le 2+ \sum_{\ell=2}^{b} \sum_{i=0}^{p-1} \frac{(i+1)2^\ell}{i+1}\binom{2^\ell- 1}{i}\\
& = 2+ \sum_{\ell=2}^{b} \sum_{i=0}^{p-1} (i+1)\binom{2^\ell}{i+1}\\
& \le 2+ p\sum_{\ell=2}^{b}\binom{2^\ell}{\downc p} = O\left(p\binom{2^b}{\downc p}\right)\,.
\end{align*}
Thus, the total number of $\oplus$-gates used
is $O\bigl(p\binom{n}{\downc p}\bigr)$.

\subsection{Generalisation for $p > 1$ and $q>1$} 

It is an open problem whether a circuit of size 
$O\bigl(p\binom{n}{\downc p}+q\binom{n}{\downc q}\bigr)$
exists when $p,q>1$.
Our main result in Sect.~\ref{section:circuit_construction} 
gives a circuit of size $O\bigl((n^p+n^q)\log n\bigr)$.

\section{Count-weight semirings}
\label{appendix:weight-count-semiring}

This section reviews the count-weight semiring used in counting heaviest $k$-paths as described in \S\ref{section:k-paths}. The following theorem is quite standard, but we give a detailed proof for the sake of completeness. 

\begin{theorem}
The Cartesian product $\N \times (\R \cup \{ -\infty \})$ is a commutative semiring when equipped with operations
\begin{equation}\label{eq:max_count_plus}
(c, w) \oplus (d, v) =
\begin{cases}
	(c, w) & \textrm{if $w>v$},\\
	(d, v) & \textrm{if $w<b$},\\
	(c+d, w) & \textrm{if $w=b$}	
\end{cases}
\end{equation}
and
\begin{equation}\label{eq:max_count_times}
(c, w) \odot (d, v) = (cd, w+v)\,.
\end{equation}
\end{theorem}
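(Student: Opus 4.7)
The plan is to verify the semiring axioms by direct case analysis, after first fixing identities and adopting the conventional arithmetic extension $w + (-\infty) = (-\infty) + w = -\infty$ for all $w \in \R \cup \{-\infty\}$. I claim the additive identity is $\mathbf{0} = (0, -\infty)$ and the multiplicative identity is $\mathbf{1} = (1, 0)$. Conceptually, it is useful to view an element $(c, w)$ as a ``formal count of $c$ objects of weight $w$,'' so that $\oplus$ keeps only the tied maximum-weight summands and sums their counts, while $\odot$ multiplies counts and adds weights; this is a ``counted'' version of the max-plus (tropical) semiring, and makes the case analyses below feel routine.

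The multiplicative structure $(\N\times(\R\cup\{-\infty\}),\odot)$ is a commutative monoid essentially for free: $\odot$ is the componentwise product of the commutative monoid $(\N,\cdot)$ and the commutative monoid $(\R\cup\{-\infty\},+)$ with identity $0$. Commutativity and associativity of $\odot$ then follow coordinatewise, and $\mathbf{1}=(1,0)$ is a two-sided identity. The absorbing property $(c,w)\odot\mathbf{0} = (c\cdot 0,\,w+(-\infty)) = (0,-\infty) = \mathbf{0}$ requires exactly the convention $w+(-\infty)=-\infty$ noted above.

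For the additive structure, commutativity of $\oplus$ is immediate from the symmetric case split in \eqref{eq:max_count_plus}. For associativity, given $(c,w),(d,v),(e,u)$ I would split into cases according to which of $w,v,u$ is maximal and which ties occur; there are essentially the cases (i) a unique strict maximum, (ii) exactly two tied at the maximum, (iii) all three tied. In each case both $((c,w)\oplus(d,v))\oplus(e,u)$ and $(c,w)\oplus((d,v)\oplus(e,u))$ reduce to the pair $(\sum_{i\in T} c_i,\,M)$ where $M=\max(w,v,u)$ and $T$ indexes those summands attaining $M$; agreement is then automatic. That $\mathbf{0}=(0,-\infty)$ is an identity follows by checking the two subcases $w>-\infty$ (first case of \eqref{eq:max_count_plus}) and $w=-\infty$ (third case, giving $(c+0,-\infty)$).

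Finally, for distributivity, compute $(c,w)\odot\bigl((d,v)\oplus(e,u)\bigr)$ and $\bigl((c,w)\odot(d,v)\bigr)\oplus\bigl((c,w)\odot(e,u)\bigr)$, and split on whether $v>u$, $v<u$, or $v=u$. The key observation is that $\odot$ with $(c,w)$ adds $w$ to both candidate weights, so $v\gtreqless u$ iff $w+v\gtreqless w+u$; hence the case chosen on the right matches the case chosen on the left. In the tied subcase the count coordinate is $c(d+e) = cd+ce$, matching what the right-hand side produces. The same matching shows distributivity on the other side, which also follows from commutativity of $\odot$.

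The main obstacle is simply disciplined bookkeeping in the case analyses for associativity of $\oplus$ and for distributivity; there is no subtle step beyond the arithmetic convention $w+(-\infty)=-\infty$, which is needed precisely to make $\mathbf{0}$ absorbing under $\odot$ and to make the additive-identity axiom work when $w=-\infty$ falls into the tie case of \eqref{eq:max_count_plus}.
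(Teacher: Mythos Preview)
Your proposal is correct and follows essentially the same approach as the paper: direct verification of the semiring axioms with additive identity $(0,-\infty)$ and multiplicative identity $(1,0)$, using the order-preservation of the weight shift $v\mapsto w+v$ for distributivity. The only difference is presentational: the paper packages your case analysis into a single Iverson-bracket formula $(c,w)\oplus(d,v)=(c\,[w=w\lub v]+d\,[v=w\lub v],\,w\lub v)$, which lets associativity and distributivity be read off from the associativity of $\lub$ and the identity $w+(v\lub u)=(w+v)\lub(w+u)$ without an explicit split into cases.
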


\begin{proof} In the following, we will use the \emph{Iverson bracket} notation; that is, if $P$ is a predicate we have $[P] = 1$ if $P$ is true and $[P] = 0$ if $P$ is false. For the maximum of two elements $x,y \in \R$, we write $x \lub y = \max(x,y)$. We note that $\lub$ is a associative, commutative binary operation on $\R \cup \{ - \infty \}$.
	
First, let us note that it follows directly from (\ref{eq:max_count_plus}) that
\[ (c, w) \oplus (d, v) = (c \, [w = w\lub v] + d \, [v = w \lub v],w \lub v)\,.\]
We now prove the claim using this observation and the known properties of $+$ and $\lub$.

\medskip
\noindent\emph{Associativity of  $\oplus$.}
The associativity of $\oplus$ follows from the associativity of $+$ and $\lub$, as we have
\begin{align*}
	&(c_1,w_1) \oplus \left((c_2,w_2) \oplus (c_3, w_3)\right)\\
	&\hspace{1cm} = (c_1,w_1) \oplus \bigl(c_2 \, [w_2 = w_2 \lub w_3] + c_3 \, [w_3 = w_2 \lub w_3], w_2 \lub w_3\bigr)\\
	&\hspace{1cm} = \Bigl(\sum_{i=1}^3 c_i \, [w_i = w_1 \lub (w_2 \lub w_3)], w_1 \lub (w_2 \lub w_3)\Bigr)\\
	&\hspace{1cm} = \Bigl(\sum_{i=1}^3 c_i \, [w_i = (w_1 \lub w_2) \lub w_3], (w_1 \lub w_2) \lub w_3\Bigr)\\
	&\hspace{1cm} = \bigl(c_1 \, [w_1 = w_1 \lub w_2] + c_2 \, [w_2 = w_1 \lub w_2], w_1 \lub w_2\bigr) \oplus (c_3, w_3)\\
	&\hspace{1cm} = \left((c_1,w_1) \oplus (c_2,w_2)\right) \oplus (c_3, w_3)\,.
\end{align*}

\noindent\emph{Commutativity of  $\oplus$.} The commutativity of $\oplus$ follows from the commutativity of $+$ and $\lub$, as we have
\begin{align*}
(c, w) \oplus (d, v) & = (c \, [w = w\lub v] + d \, [v = w\lub v], w\lub v)\\
					   & =  (d \, [v = w\lub v] + c \, [w = w\lub v], w\lub v)\\
					   & = (d, v) \oplus (c, w)\,.
\end{align*}

\noindent\emph{Existence of additive identity.}
We have that $(0, {- \infty})$ is the identity element for $\oplus$, as we have
\begin{align*}
(c,w) \oplus (0, {- \infty}) & = \bigl(c \, [w = w \lub {-\infty}] + 0 \, [{-\infty} = w \lub {-\infty}], w \lub {-\infty}\bigr)\\
							& = (c, w)\,.
\end{align*}

\noindent\emph{Associativity of  $\odot$.}
We have
\begin{align*}
	&(c,w) \oplus \bigl((d,v) \oplus (e, u)\bigr)\\
	&\hspace{1cm} = (c,w) \oplus \bigl(d e, v + u\bigr)\\
	&\hspace{1cm} = \bigl(c  (d  e), w + (v + u)\bigl)\\
	&\hspace{1cm} = \bigl((c  d)  e, (w + v) + u\bigr)\\
	&\hspace{1cm} = \bigl(c  d, w + v\bigr) \oplus (e, u)\\
	&\hspace{1cm} = \bigl((c,w) \oplus (d,v)\bigr) \oplus (e, u)\,.\\
\end{align*}

\noindent\emph{Commutativity of  $\odot$.}
We have 
\[ (c, w) \odot (d, v) = (c  d, w+v) = (d  c, v+w) = (d, v) \odot (c, w)\,. \]

\noindent\emph{Existence of multiplicative identity.}
The multiplicative identity element is $(1,0)$, since
\[ (c,w) \odot (1,0) = (c  1, w + 0) = (c,w)\,.\]

\noindent\emph{Distributivity.}
As $\odot$ is commutativity, it suffices to prove that multiplication from left distributes over addition. We have 
\begin{align*}
	&(c,w) \odot \bigl((d,v) \oplus (e,u)\bigr)\\
	&\hspace{1cm} = (c,w) \odot \bigl(d \, [v = v \lub u] + e \, [u = v \lub u], v \lub u\bigr)\\
	&\hspace{1cm} = \bigl(c \, (d \, [v = v \lub u] + e \, [u = v\lub u]), w + (v \lub u)\bigr)\\
	&\hspace{1cm} = \bigl(cd \, [v = v \lub u] + ce \, [u = v \lub u], (v + w)\lub(u + w)\bigr)\\
	&\hspace{1cm} = \bigl(cd \, [w + v = (w + v)\lub(w + u)] + ce \, [w + u = (w + v)\lub(w+ u)],\\
	&\hspace{1.6cm} (w + v)\lub(w + u)\bigr)\\
	&\hspace{1cm} = (c  d, w + v) \oplus (c  e, w + u)\\
	&\hspace{1cm} = \bigl((c, w) \odot (d,v)\bigr) \oplus \bigl((c, w) \odot (d,v)\bigr)\,.
\end{align*}

\noindent\emph{Annihilation in multiplication.}
Finally, we have that the additive identity element annihilates in multiplication, that is,
\[ (c,w) \odot (0, {-\infty}) = (c  0, w + {-\infty}) = (0, {-\infty})\,.\]
\end{proof}

\end{document}